\newtheorem{fact}{Fact}
\newcommand{\junk}[1]{}
\begin{document}





\title{The Voronoi Diagram of Weakly Smooth Planar Point Sets
  in $O(\log n)$ Deterministic Rounds on the Congested Clique}
\author{ Jesper Jansson
  \inst{1}
  \and
  Christos Levcopoulos
  \inst{2}
  \and \\
  Andrzej Lingas
  \inst{2}
}
\institute{
Graduate School of Informatics, Kyoto University, Kyoto, Japan.
\email{jj@i.kyoto-u.ac.jp}
\and
Department of Computer Science, Lund University, Lund, Sweden. 
\email{$\{$Christos.Levcopoulos, Andrzej.Lingas$\}$@cs.lth.se}
}

\pagestyle{plain}
\maketitle

\begin{abstract}
  We study the problem of computing the Voronoi diagram of a set of
  $n^2$ points with $O(\log n)$-bit coordinates in the Euclidean plane
  in a substantially sublinear in $n$ number of rounds in the
  congested clique model with $n$ nodes. Recently,
  Jansson et al. have
  shown that if the points are uniformly at random distributed in a
  unit square then their Voronoi diagram within the square can be
  computed in $O(1)$ rounds with high probability (w.h.p.).  We show
  that if a very weak smoothness condition is satisfied by an
  input set of $n^2$ points with $O(\log n)$-bit coordinates in the
  unit square then the Voronoi diagram of the point set within the
  unit square can be computed in $O(\log n)$ rounds in this model.
  \junk{Next, we present a randomized approach to the construction of the
  intersection of $n^2$ half-spaces dual to points with $O(\log
  n)$-bit coordinates in the three dimensional Euclidean space. We can
  solve this intersection problem in $O(\log n)$ rounds on the
  congested clique with $n$ nodes w.h.p.  By the known reductions, we
  conclude that the convex hull of $n^2$ points with $O(\log n)$-bit
  coordinates in the three-dimensional space as well as the Voronoi
  diagram of $n^2$ points with $O(\log n)$-bit coordinates in the
  plane can be constructed in $O(\log n)$ rounds on the congested
  clique with $n$ nodes w.h.p.}
\end{abstract}

\begin{keywords}
  Voronoi diagram, Delaunay triangulation, the convex hull,
  distributed algorithm, the congested clique model
\end{keywords}

\section{Introduction}
The \emph{congested clique} is a relatively new model of
communication and computation introduced by Lotker \emph{et al.} in
2005 \cite{LP-SPP05}. It focuses on the cost of communication
between the nodes in a network, ignoring the cost of local computation
within each node. Hence, it can be seen as opposite to the Parallel
Random Access Machine (PRAM) model, studied extensively in the 80s and
90s. The PRAM model focuses on the computation cost and ignores the
communication cost \cite{ACG85}.

Originally, the complexity of dense graph problems has been studied in
the congested clique model under the following assumptions.  Each node
of the congested clique represents a distinct vertex of the input
graph and knows its neighborhood in the graph.
Every node also knows the unique ID numbers (between $1$ and $n$) of itself
and all the other nodes at the start of the computation.
The computation proceeds in rounds.  In each
round, each of the $n$ nodes can send a distinct message of $O(\log
n)$ bits to each other node and can perform unlimited local
computation.
The primary complexity
objective is to minimize the number of rounds necessary to solve a
given problem on the input graph in this model.

For several basic graph problems, e.g., the minimum spanning tree
problem, one has succeeded to design even $O(1)$-round protocols in the
congested clique model \cite{K_21,R}.  Observe that when the input
graph is of bounded degree and edge weights have $O(\log n)$-bit representation, each node can send
the ID numbers of all nodes in its neighborhood and
the weights of its incident edges,
e.g., to  the first node in $O(1)$ rounds. After that, the
first node can solve the whole problem locally.  However, such a
trivial solution would require $\Omega(n)$ rounds when the input
graph is dense.
\junk{
\begin{figure}
\begin{center}
\includegraphics[scale=0.7]{computation1}
\end{center}
\caption{An example of a congested clique network.}
\label{fig: computation1}
\end{figure}}

Matrix problems \cite{CK15}, sorting and routing
\cite{L}, and geometric problems \cite{JLLP24} have also been studied in the congested
clique model.  In all cases, the basic input items, i.e., matrix
entries or keys, or points in the plane, respectively, are assumed to
have $O(\log n)$-bit representations and  each node initially
has a batch of $n$ such items. {\em Note that the bound
on bit representation of an input item is a
natural consequence of the $O(\log n)$-bit  bound on the size
of a single message which makes input items of unbounded bit
representation imcompatible with the assumed model.}
As in the graph case, in every round,
each node can send a distinct $O(\log n)$-bit message to each other
node and perform unlimited local computation.  Significantly, it has
been shown that matrix multiplication can be performed in
a number of rounds substantially sublinear in $n$ \cite{CK15}
while sorting and routing can be implemented
in  $O(1)$ rounds (Theorems 4.5 and 3.7 in \cite{L}).

As for the geometric problems, Jansson et al. \cite{JLLP24}
recently provided
low polylogarithmic, deterministic upper bounds
on the number of rounds required to solve
several basic geometric problems
for a set of $n^2$ points in the plane  with $O(\log
n)$-bit coordinates
in the model of congested clique with $n$ nodes.
As for the construction of the Voronoi diagram and the dual Delaunay
triangulation of the point set (see Fig.~\ref{fig: first} for an
illustration and Section~2 for the formal definition), they have shown
an $O(1)$ upper bound on the number of required rounds under the
assumption that the points are drawn uniformly at random from a unit
square. On the other hand, already at the end of 90s,
Goodrich presented $O(1)$-round randomized protocols for the
construction of three-dimensional convex hull of a set of points
in three-dimensional Euclidean space in $O(1)$ communication
rounds in the so-called Bulk Synchronous Processing model (BSP) \cite{God97}.
His result also implies an $O(1)$-round bound on the randomized construction
of the Voronoi diagram and the dual Delaunay triangulation
of a planar point set in the BSP model.
By using the $O(1)$-round routing protocol of Lenzen \cite{L},
Goodrich's $O(1)$ bound on the number of rounds necessary 
for the construction of the Voronoi diagram and Delaunay triangulation
most likely can be carried over from the BSP model to ours.

\begin{figure}
\begin{center}
\includegraphics[scale=0.88]{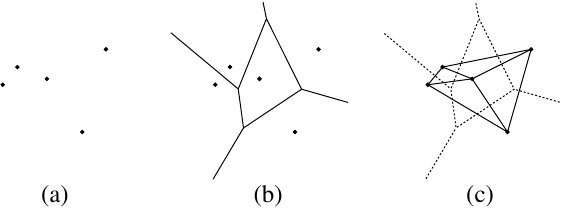}
\end{center}
\caption{An example of a planar point set, its Voronoi diagram, and the dual Delaunay triangulation.}
\label{fig: first}
\end{figure}

In this context, the major open problem is to derive a non-trivial upper bound
on the number of rounds sufficient to deterministically construct the Voronoi diagram
when the points are not necessarily drawn uniformly at random.
The bottleneck in the design of efficient parallel or
distributed algorithms for the Voronoi diagram of a planar point set
using a direct divide-and-conquer approach is 
an efficient parallel or distributed merging of Voronoi diagrams
Aggarwal et al. \cite{ACG85} presented an $O(\log^2 n)$-time CREW PRAM
algorithm for the Voronoi diagram
based on an involved $O(\log n))$-time PRAM method for the parallel merging.
Subsequently, Amato and Preparata \cite{AP95} demonstrated
an $O(\log n)$-time CREW PRAM algorithm for the three-dimensional
convex hull and consequently  also for the two-dimensional Voronoi
diagram of a point set.
\junk{
  We remark here that the existence of an $O(\log n)$-time (unit cost)
  PRAM algorithm for a geometric problem on a point set does not
  guarantee the membership of the problem in the $NC^1$ class defined
  in terms of Boolean circuits \cite{FW23,NS22}. Simply, assuming that the input
points have $O(\log n)$-bit coordinates, the arithmetic operations of
the PRAM implemented by Boolean circuits of bounded fan-in have a
non-constant depth, at least $\Omega( \log \log n).$ Consequently, the
Boolean circuit simulating the $O(\log n)$-time PRAM for fixed input size can have
a super-logarithmic depth. This is a subtle and important point in the context of
relatively recent results of Frei and Vada  providing simulations of the
classes $NC^{k}$, $k>0,$  by MAXREDUCE (see Theorems 9 and 10  in \cite{FW23}), and
consequently, in the Massively Parallel Computation (MPC) and BSP models
(see Theorem 1 in \cite{NS22}).
Due to the $O(1)$-round routing protocol of Lenzen \cite{L},
the congested clique model in our setting  can be roughly regarded as
a special case of MPC, where the size of the input is approximately
the square of the number of processors. For this reason, the $NC$ imulation
results from \cite{FW23} 
are relevant to our model only  when the parameter
$\epsilon$ in \cite{FW23} equals $\frac 12$. Tis possible in case
of Theorem 9 in \cite{FW23} on $NC^1$ simulation
but not possible in case of Theorem 10 in \cite{FW23} on $NC^k$,
k > 0, simulation. However, the proof of the former theorem
in \cite{FW23}  relies on
a strict logarithmic upper bound on the depth
of Boolean circuit of bounded fan-in 
required by Barrington's characterization of the $NC^1$ class
in terms of bounded-width polynomial-size branching programs.
Otherwise, one has to adhere the direct circuit simulation method from \cite{FW23}
that does not work for $\epsilon=\frac 12.$
In summary,Theorem in \cite{FW23} does not seem to have any direct consequences
for geometric problems on point sets in our model setting.}

We substantially extend the local approach to the
construction of the Voronoi diagram used in the design of parallel and
distributed algorithms for the Voronoi diagram of points drawn
uniformly at random, e.g., from a unit square \cite{JLLP24,LKL,VV}.
We show that already a very weak smoothness condition on the input set
of $n^2$ points with $O(\log n)$-bit coordinates within a unit square
is sufficient to obtain an $O(\log n)$ upper bound on the number of
rounds required to construct the Voronoi diagram of the set within the
unit square on the congested $n$-clique.  Roughly, our weak smoothness
condition says that if a square $Q$ of side length $\ell$ within the
unit square contains at least $n$ out of the $n^2$ input points then
any square of the same size at distance at most $4\sqrt 2\ell$ from $Q$ and
within the unit square has to contain at least one input point.
\junk{
In the second part of our paper, we present a randomized approach to
the more general problem of constructing the intersection of $n^2$
half-spaces in the three-dimensional Euclidean space on the congested
clique with $n$ nodes. Each of the input half-spaces is specified by
an equation with three $O(\log n)$-bit coefficients (in other words,
the dual point has $O(\log n)$-bit coordinates). We adapt the random
sampling/pooling methods for this problem introduced by Clarkson
\cite{C88} in a sequential setting and refined by Reif and Sen
\cite{RS} to a parallel setting. The basic insight here is that given
a sample $R$ of half-spaces drawn uniformly at random from the set $S$
of input half-spaces, the intersection of the half-spaces in $R$ can
be used to divide the intersection problem for $S$ into subproblems of
expected size $O(|S|/|R|)$ and the total expected size $O(|S|))$
\cite{C88,RS} (see Section 4 for details).  Our main idea is to divide
the intersection problem for $S$ by the intersection of the
half-spaces in a random sample $R$ of an appropriate size into
subproblems that w.h.p. can be solved locally at the nodes of the
congested clique (due to their unlimited computational power).  Thus, our
method does not use a recursion in contrast to those of Clarkson
\cite{C88} and Reif and Sen \cite{RS}.  In result, we can solve the
half-space intersection problem in $O(\log n)$ rounds on the
congested clique w.h.p.  By the so-called dual transform (see Section 4), we
can also construct the convex hull of $n^2$ points with $O(\log
n)$-bit coordinates in the three-dimensional space in $O(\log n)$
rounds on the congested clique w.h.p.  Finally, by the so-called lifting mapping
\cite{GS}, we conclude that the Voronoi diagram of $n^2$ points with
$O(\log n)$-bit coordinates in the plane can be constructed in $O(\log
n)$-rounds on the congested clique w.h.p.
}

In order to simplify the presentation, we assume throughout the paper
that the points in the input point sets are in general position (i.e.,
neither any three input points are co-linear nor any four input points
are co-circular).

Our paper is structured as follows.  The next section contains basic
mathematical/geometric definitions, lemma, and facts on routing and
sorting in the congested clique model.  Section 3 presents our
protocol for the Voronoi diagram and Delaunay triangulation of a
weakly smooth planar point set within a square.  We conclude with
final remarks.

\section{Preliminaries}

The cardinality of a set $S$ is denoted  by $|S|.$

For a positive integer $r,$ $[r]$ stands for the set of
positive integers not exceeding $r.$

For a finite set $S$
of points in the Euclidean plane,
the \emph{Voronoi diagram of~$S$} is the partition of the plane into
$|S|$~regions such that each region consists of all points in the plane having
the same closest point in~$S$; see Fig. \ref{fig: first}.

A \emph{Delaunay triangulation of~$S$} is a maximal set of non-crossing edges
between pairs of points from~$S$ such that no point from~$S$ is placed inside any of
the formed triangles' circumcircles.
It is well known that if no four
points in $S$ are co-circular then
the Delaunay triangulation of~$S$ is a dual of the Voronoi
diagram of~$S$ in the following sense~\cite{PS}:
for each edge~$e$ of 
of each region in the Voronoi diagram
of $S$, if $e$ is a part of the bisector of the points
$u,\ v$ in $S$ then $(u,v)$ is an edge of the Delaunay
triangulation of $S$; again, see Fig. \ref{fig: first}.

Our concept of weak smoothness is formally defined in terms
of two parameters as follows.

\begin{definition}
Let $\varepsilon,\ d$ be two positive real constants.
A set of $N$ points in a unit square is {\em $(\varepsilon,d)$-smooth}
if for any two equal size  squares $Q,\ R$ within the unit square
the following implication holds:\\
if $Q$ contains at least $N^{\varepsilon}$ points of $S$
and $R$ is at distance at most $d\cdot \ell$ from $Q$,
where  $\ell$ is the length of each edge  of $Q$ and $R$,
then $R$ contains at least one point of $S$.
\end{definition}

We also need to define a sequence of grids within a unit square and related
notions.

\begin{definition}
For a nonnegative integer $i,$ we shall denote by $G_i(U)$ the orthogonal grid
within the unit orthogonal square  $U$ that includes the edges of $U$ such that the distance between
two neighboring vertical or horizontal grid line segments is $\frac 1 {2^i}.$
A {\em basic square} of $G_i(U)$ is a square within $U$ such that the endpoints
of each its edge is a pair of neighboring grid points.
For a basic square $R$ in $G_i(U),$ we shall denote the orthogonal
region consisting of $R$ and the two layers of basic squares around $R$ by $TL_i(R)$
(if between $R$ and an edge of the unit square there is place only for one
or zero layers then $TL_i(R)$ includes only one or zero layers on this side, respectively).
\end{definition}

\begin{figure}
\begin{center}
\includegraphics[scale=0.7]{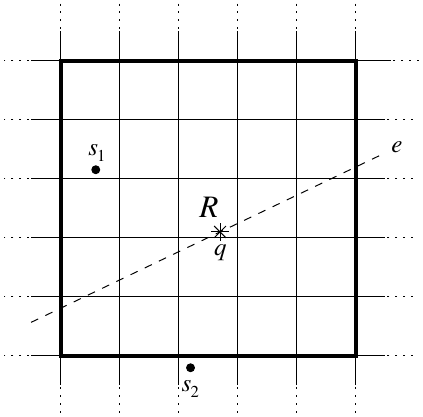}
\end{center}
\caption{An example of the configuration in  the proof of Lemma \ref{lem: first}.}
\label{fig: theo}
  \end{figure}

The proof of the following lemma corresponds to the second paragraph of
the proof of Theorem 4  in \cite{JLLP24}.

    \begin{lemma}\label{lem: first}
      Let $R$ be a basic square in a grid $G_i(U)$ within the unit square $U$.
      Consider a finite set $S$ of points within the unit square.
      If $R$ contains a point in $S$ then the Voronoi diagram of $S$
      within $R$ can be computed by taking into account only the points
      of $S$ within $TL_i(U).$
      Hence, in particular all edges $(u,v)$ of the Delaunay triangulation
      of $S$ such that a part of the bisector of $u$ and $v$ borders
      some region of the Voronoi diagram of $S$ within $R$ can be determined.
   \end{lemma}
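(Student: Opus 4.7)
The plan is to derive the conclusion from two elementary distance estimates. Write $\ell = 1/2^i$ for the side length of $R$, and let $p$ be an arbitrary point of $S \cap R$, which exists by hypothesis. The idea is to compare, for each $q \in R$, the distance from $q$ to $p$ against the distance from $q$ to any point of $S$ lying outside $TL_i(R)$, and conclude that the latter is always strictly larger; the nearest neighbor of $q$ in $S$ therefore lies in $S \cap TL_i(R)$.

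The first bound is that the distance from $q$ to $p$ is at most $\ell\sqrt{2}$, since $p$ and $q$ both lie in an axis-aligned square of side $\ell$ whose diagonal has length $\ell\sqrt{2}$. Hence the nearest neighbor of $q$ in $S$ is at distance at most $\ell\sqrt{2}$. The second bound is that every point outside $TL_i(R)$ is at distance at least $2\ell$ from every point of $R$: in the generic case, the two surrounding layers of basic squares of side $\ell$ form a ring of width exactly $2\ell$ around $R$; in the boundary case, where only zero or one full layer fits on some side, the missing region lies outside $U$ and hence contains no points of $S$, so the lower bound still applies to all points of $S \setminus TL_i(R)$. Since $\ell\sqrt{2} < 2\ell$, the nearest point of $S$ to any $q \in R$ must lie in $S \cap TL_i(R)$.

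It follows that the Voronoi diagram of $S$ restricted to $R$ coincides with the restriction to $R$ of the Voronoi diagram of $S \cap TL_i(R)$, and the claim about Delaunay edges is then an immediate consequence of the Voronoi--Delaunay duality recalled in Section~2: any bisector segment of two points $u,v \in S$ that borders a Voronoi region intersected with $R$ must come from a pair $u,v \in S \cap TL_i(R)$, so the corresponding Delaunay edge $(u,v)$ is determined by $S \cap TL_i(R)$ alone. I do not foresee any substantial obstacle; the only subtlety is the boundary case near $\partial U$, where $TL_i(R)$ is truncated, but it is dispatched by the observation that $S \subseteq U$, so no input point is omitted when layers outside $U$ are dropped from $TL_i(R)$.
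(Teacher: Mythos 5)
Your proof is correct and follows essentially the same approach as the paper: both arguments rest on the two distance estimates that any point of $R$ is within $\sqrt{2}\,\ell$ of a point of $S\cap R$ while every point outside $TL_i(R)$ is at distance at least $2\ell$ from $R$, with $\sqrt{2}\,\ell < 2\ell$. The paper phrases it as a contradiction (assume a Voronoi edge in $R$ involves a site outside $TL_i(R)$), whereas you argue directly that every $q\in R$ has its nearest site in $TL_i(R)$, and you additionally spell out the boundary truncation case, but the substance is identical.
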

   \begin{proof}
     Let $e$ be an edge of the Voronoi diagram of $S$ within $R$. The edge
         $e$ has to be a part of the bisector of some couple of points
         $s_1$ and $s_2$ in $S.$ Consider an arbitrary point $q$ on
         $e$.
     Suppose that $s_1$ or $s_2$ is placed outside $TL_i(R)$,
     i.e., the orthogonal area
     consisting of at
most $1+8+16=25$ squares including $R$.
See Fig. \ref{fig: theo}.  Without loss of
generality, let $s_2$ be such a point.  Then the distance between $q$ and $s_2$
is at least $2 \cdot \frac{1}{2^i}$, while the distance between $q$ and
every point inside~$R$ is at most $\sqrt{2} \cdot \frac{1}{2^i}$.  We
obtain a contradiction because $R$ contains at least one point from $S$
and $q$ is closer to such a point than to $s_2$.
\qed
   \end{proof}
\junk{
   The following fact is crucial for our randomized approach
   to the construction of Voronoi diagrams.

   \begin{fact} \label{fact: chernoff}(multiplicative Chernoff lower bound)
  Suppose $X_1, ..., X_n$ are independent random variables
  taking values in $\{0, 1\}.$
  Let X denote their sum and let $\mu= E[X]$
  denote the sum's expected value. Then, for any $\delta \in [0,1],$
  $Prob(X\le (1-\delta)\mu)\le e^{-\frac {\delta^2 \mu}2}$ holds.
  Similarly, for any $\delta \ge 0,$
  $Prob(X\ge (1+\delta)\mu)\le e^{-\frac {\delta^2 \mu}{2+\delta}}$ holds.  
  \end{fact}

We shall say that an event dependent on $n^2$ input
basic geometric objects in the two or three dimensional Euclidean space
     holds with high probability (w.h.p.) if its probability is
     at least $1-\frac 1 {n^{\alpha}}$ asymptotically (i.e., there is
     an integer $n_0$ such that for all $n\ge n_0,$ the probability is
     at least $1-\frac 1 {n^{\alpha}}$), where $\alpha $ is any
     constant not less than $2$.
}
Lenzen gave an efficient solution to the following fundamental routing problem
in the congested clique model, known as
     the {\em Information Distribution Task} (IDT) \cite{L}:\\
Each node of the congested $n$-clique holds a set of exactly $n$
$O(\log n)$-bit messages with their destinations, with multiple messages from
the same source node to the same destination node allowed.
Initially, the destination of each message is known only to its source node.
Each node is the
destination of exactly $n$ of the aforementioned messages. The
messages are globally lexicographically ordered by their source node, their
destination, and their number within the source node.  For simplicity,
each such message explicitly contains these values, in particular
making them distinguishable. The goal is to deliver all messages to their
destinations, minimizing the total number of rounds.

Lenzen proved that   IDT can be solved in $O(1)$ rounds (Theorem 3.7 in
\cite{L}). He also noted that the relaxed IDT, where each node is required
to send and receive at most $n$ messages, reduces to IDT
in $O(1)$ rounds.
From here on, we shall refer to this important result as:

\begin{fact}\cite{L}\label{fact: routing}
  The relaxed Information Distribution
  Task can be solved deterministically within $O(1)$ rounds.
  \end{fact}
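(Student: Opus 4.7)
The plan is to first reduce the relaxed Information Distribution Task to the strict symmetric variant described just before the fact (exactly $n$ messages sent and $n$ received per node, globally lexicographically ordered), and then solve the strict variant by a two-phase indirect routing scheme. For the reduction, each node broadcasts its current send-count and receive-count: these are $2n$ values of $O(\log n)$ bits and can be exchanged in $O(1)$ rounds by direct all-to-all communication. Each node then appends dummy messages so that its send-count and receive-count become exactly $n$, and prefix-sums over the counts yield a global lexicographic index for every real message. All of this is local arithmetic on data everyone now shares, so it costs $O(1)$ rounds.

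For the strict variant I would route through intermediate nodes: in Phase~1 each source $s$ hands its $j$-th message to an intermediate $h(s,j)$ chosen by a deterministic, locally-computable rule, and in Phase~2 each intermediate ships the messages it has collected to their true destinations. For both phases to cost $O(1)$ rounds one needs simultaneously (i)~each source sends $O(1)$ messages to each intermediate, (ii)~each intermediate holds $O(n)$ messages after Phase~1, and (iii)~each intermediate sends $O(1)$ messages to each destination. View the multiset of $n^2$ messages as a bipartite multigraph on source-destination pairs whose maximum degree is~$n$; by K\"onig's edge-coloring theorem such a multigraph decomposes into $n$ partial matchings, and each partial matching is exactly one round of $(s,t)$-disjoint direct communication. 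The goal of the rule $h$ is to \emph{implicitly} realize such a decomposition, so that no node ever needs to know the full traffic matrix.

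The concrete choice I would start with is the following: send the message whose global index is $k$ to intermediate $1+(k\bmod n)$. Condition~(i) then holds because each source owns exactly $n$ consecutively-indexed messages, one per residue class; condition~(ii) holds by pigeonhole on the total of $n^2$ messages. The main obstacle is condition~(iii): the naive residue rule can route many messages aimed at the same destination $t$ through the same intermediate, forcing Phase~2 to take $\omega(1)$ rounds. The fix I would pursue is to first re-sort the globally indexed messages primarily by destination and only secondarily by source, then apply the residue rule on this new order; because sorting and stable routing of $n^2$ pre-indexed items between the $n$ nodes are themselves primitives obtainable in $O(1)$ rounds (Theorem~4.5 in~\cite{L}), this re-indexing is free. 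A combinatorial counting argument, using that at most $n$ messages share any destination, should then guarantee that each intermediate forwards at most $O(1)$ messages to each destination, giving condition~(iii) simultaneously with (i) and (ii). Once this balancing lemma is established, Phases~1 and~2 are each a constant number of one-message-per-ordered-pair rounds, so the stated $O(1)$-round bound for the relaxed task follows.
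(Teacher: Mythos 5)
The fact you are asked to prove is stated in the paper only as a citation: it is Lenzen's Theorem~3.7 in~\cite{L} together with Lenzen's own remark that the relaxed variant reduces to the exact one. The paper itself gives no proof, so there is nothing internal to compare your argument against; you are in effect trying to re-prove Lenzen's theorem from scratch, and your proposal has a genuine gap.

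The flaw is in the interaction between your conditions~(i) and~(iii). The residue rule $h(k)=1+(k\bmod n)$ balances Phase~1 precisely because, in the original global order, each source owns $n$ \emph{consecutive} indices and hence one message per residue class. Your fix for condition~(iii) re-indexes the messages primarily by destination, and that destroys the consecutivity at the source. Concretely, take the perfectly balanced instance in which every source sends exactly one message to every destination, and break ties in the sort by source ID. After sorting by (destination, source), the message from source $s$ to destination $t$ sits at global index $(t-1)n+s$, so all $n$ messages held by source $s$ have indices congruent to $s$ modulo $n$ and would be routed to the single intermediate $1+(s\bmod n)$. Phase~1 then requires every source to ship all $n$ of its messages to one node, which costs $\Omega(n)$ rounds, not $O(1)$. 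Hence the ``combinatorial counting argument'' you defer to cannot exist: the scheme does not balance both phases simultaneously. Separately, if you intended the re-indexing to involve physically moving messages via Theorem~4.5 of~\cite{L}, that step is circular, since Lenzen's sorting protocol is built on top of the very routing primitive you are trying to establish. Lenzen's actual argument requires a more careful, multi-stage balancing of the traffic multigraph using information collected in preliminary rounds, and cannot be replaced by a single locally computed hash-by-index rule.
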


The {\em Sorting Problem} (SP) is defined as follows:\\
Each node $i$ of the congested $n$-clique holds a set of $n$
$O(\log n)$-bit keys. All the keys are different w.l.o.g.  Each node
$i$ needs to learn all the keys of indices in $[n(i-1)+1,ni]$ (if any)
in the total order of all keys.

Lenzen showed that SP can be solved in $O(1)$ rounds if each node holds a
set of exactly $n$ keys (Theorem 4.5 in \cite{L}).  In order to relax
the requirement that each node holds exactly $n$ keys to that of with
most $n$ keys, we can determine the maximum key and add appropriate
different dummy keys in $O(1)$ rounds.
We summarize this result as:

\begin{fact}\cite{L}\label{fact: sorting}
    The relaxed Sorting Problem can be solved in $O(1)$ rounds.
  \end{fact}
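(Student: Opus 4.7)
The plan is to reduce the relaxed Sorting Problem to the exact-$n$-keys Sorting Problem, which Lenzen solved in $O(1)$ rounds (Theorem~4.5 in \cite{L}). The idea is simply to pad each node up to exactly $n$ keys using dummy keys that are guaranteed to sort \emph{after} all real keys and to be pairwise distinct across nodes.

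First, I would have every node compute the maximum of its own (at most $n$) keys locally, then broadcast this local maximum to every other node. Since each node sends a single $O(\log n)$-bit message on each of its $n-1$ outgoing links, this uses one round; alternatively, it follows directly from Fact~\ref{fact: routing}. After this round every node knows the same global maximum $M$ of all real keys.

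Next, each node $i$ holding $k_i \le n$ real keys pads its set with $n - k_i$ dummy keys drawn from a range reserved for node $i$, for instance the values $M + (i-1)n + j$ for $j = 1, \dots, n - k_i$. Since these ranges are disjoint across nodes and every dummy exceeds $M$, the padded instance consists of exactly $n^2$ distinct keys with every node holding exactly $n$ of them, and every real key precedes every dummy key in the total order. Now invoke Lenzen's exact Sorting Problem protocol on the padded input: in $O(1)$ rounds, each node $i$ learns the keys whose ranks in the padded sort fall in $[n(i-1)+1,\,ni]$. Finally, each node locally discards any received key larger than $M$; what remains at node $i$ is exactly the set of real keys whose ranks in the original sort lie in $[n(i-1)+1,\,ni]$ (possibly empty), as required.

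The main point to verify, and really the only non-trivial one, is that the final filtering step yields the correct output. This is immediate once one observes that because every dummy is strictly larger than $M$ and hence than every real key, the restriction of the padded sorted order to the real keys coincides with the original sorted order of the real keys; nodes whose rank range contains only dummies correctly receive nothing to output. Everything else is just routing and bookkeeping, which runs in $O(1)$ rounds by Facts~\ref{fact: routing} and by the cited Lenzen result.
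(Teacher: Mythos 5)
Your proof is correct and takes essentially the same approach as the paper: determine the global maximum key, pad each node up to exactly $n$ keys with distinct dummy keys exceeding that maximum, invoke Lenzen's exact-$n$-keys sorting protocol, and then discard the dummies. You have simply filled in the details (explicit dummy ranges, the filtering step, and why it yields the correct output) that the paper leaves implicit.
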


\section{The local approach}
   Consider a $(\frac 12,4\sqrt 2)$-smooth set of $n^2$ points with $O(\log
   n)$-bit coordinates in a unit orthogonal square.  We shall first
   describe a protocol for listing the edges of the Delaunay
   triangulation of the set that are dual to the edges of the Voronoi
   diagram of the set within the unit square.  Roughly, it implicitly
   grows a quadtree of squares rooted at the unit square in phases
   corresponding to the levels of the quadtree.  If a square $R$
   currently at a leaf of the quadtree jointly with the two layers of
   equal size squares around it includes $O(n)$ input points then the
   intersection of the Voronoi diagram of the input point set with $R$
   and the dual edges of the Delaunay triangulation of the input point
   set can be computed locally. This follows from Lemma \ref{lem: first}
   combined with the fact that the parent square of $R$ does not
   satisfy an analogous condition.  Otherwise,
   four child squares whose union forms $R$ are created on the next
   level of the quadtree. In particular, checking the aforementioned
   condition in parallel for the squares at the current front level of
   the quadtree and delivering the necessary points to the nodes
   representing respective frontier squares in $O(1)$ rounds on the
   congested $n$-clique are highly non-trivial.
   \par
   \vskip 5pt
    \noindent
        {\bf protocol} $DT-SQUARE(S,U)$
        \par
        \noindent
            {\em Input}: A $(\frac 12,4\sqrt 2)$-smooth set of $n^2$ points with
            $O(\log n)$-bit coordinates in a unit orthogonal square $U$
            held in $n$-point batches at the $n$ nodes of the congested clique.
            \par
            \noindent
                {\em Output}:The set of the edges of the Delaunay triangulation of $S$
                dual to the edges of the Voronoi diagram of $S$ within $U$
                held in $O(n)$-edge batches at consecutive clique nodes.
                \begin{enumerate}
               \item
Initialize a list $L$ of edges of the Delaunay triangulation of $S.$
\item 
                Activate the basic square $U$ in $G_0(U)$ and assign it to the first node.
                \item
                For $i=0,1,\dots $ do
                \begin{enumerate}
                \item
                Each node for each point $p$ in its batch determines the number $num(p)$ of the basic square
                of $G_i(U)$ containing $p$ in a common fixed numbering of the
                basic squares in $G_i(U)$ (e.g., column-wise).
                Next, a {\em prefixed representation} of $p$ consisting of bit representation
                of $num(p)$ followed by the bit representation of the coordinates of $p$
                is created.
                \item
                  The points in $S$
                  are sorted by their prefixed representation.
                After that each node informs all other nodes about the range of numbers
                of the basic squares in $G_i(U)$ holding the prefixed representations of
                points in $S$ that landed at the node
                after the sorting of the prefixed representations of
                all the points.
              \item
                For each basic square $W$ in $G_i(U)$ such that
                the prefixed representations of points belonging
                to $W$ landed in a sequence $C$ of at least two
                consecutive nodes, the nodes in $C$ inform additionally the other nodes in $C$
                about the number of the prefixed representations of the points in $W$
                they got so in particular the node in $C$ with the smallest index can
                compute the total number of the points in $W$.
                \item
                Each node for each active basic square $R$ in $G_i(U)$ it represents sends
             queries to the nodes holding the prefixed  point representations
             of the points in the basic squares in
             $TL_i(R)$ (i.e., in $R$ and  the two layers of basic squares around $R$ in $G_i(U)$)
             about the number of points in these squares.
             In case several nodes hold the prefixed point representation of points
             in a basic square in $TL_i(U),$ the query is send just to that with the smallest
             index.
\item
After getting answers to the queries, each node for each active basic square in $R$ in $G_i(U)$ it represents
proceeds as follows.
If  the total  number of points of $S$ in $TL_i(R)$
                  does not exceed $100n$ then the node  asks the nodes holding the prefixed representations
                  of the points in the basic squares in $TL_i(R)$
                  for sending the points to the node.
                  After that the node computes the Voronoi diagram of all these points
                  and then the intersection of the diagram with $R$ locally.
                  Next, the node appends to $L$ all edges $(u,v)$ where a part of the bisector of $u$ and $v$
                  borders some region of the Voronoi diagram in the computed intersection.
Otherwise, the node activates the four squares in $G_{i+1}(U)$ whose union forms $R$
and assigns them temporarily to itself.
\item
The nodes balance the assignment of active basic squares in $G_{i+1}(U)$ by informing
all other nodes about the number of active basic squares in  $G_{i+1}(U)$  they are assigned
and following the results of the same assignment balancing algorithm run by each
of them separately locally.
\item
The list $L$ is sorted in order to remove multiple copies
of the same edge.
\end{enumerate}
\end{enumerate}

\begin{lemma}\label{lem: logn}
$DT-SQUARE(S,U)$ activates basic squares solely
in the grids $G_i(U)$, where $i=O(\log n).$
\end{lemma}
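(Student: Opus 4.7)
The plan is to exploit only the $O(\log n)$-bit representation of the input coordinates; the weak smoothness hypothesis, although essential elsewhere in the analysis, is not actually needed for this depth bound. Concretely, fix a constant $c$ so that every coordinate of every input point is an integer multiple of $2^{-c\log n}=n^{-c}$. Any two distinct input points must then differ by at least $n^{-c}$ in either their $x$- or their $y$-coordinate.

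I would then argue that the recursion must stop at level roughly $c\log n$. A basic square of $G_i(U)$ has side $2^{-i}$, so as soon as $2^{-i}<n^{-c}$, i.e.\ once $i$ exceeds $c\log n$ by a small additive constant, both the width and the height of every basic square of $G_i(U)$ are strictly less than the minimum coordinate spacing in $S$. Such a basic square can therefore contain at most one point of $S$. Since $TL_i(R)$ is a union of at most $25$ basic squares of $G_i(U)$, this gives the uniform bound $|TL_i(R)\cap S|\le 25$ at every active square $R$ of this level.

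Comparing with the activation rule in step 3(e) of $DT$-$SQUARE$, which activates the four children of $R$ in $G_{i+1}(U)$ only when $|TL_i(R)\cap S|>100n$, we see that $25\le 100n$ for all $n\ge 1$, so the condition fails and no child of $R$ is ever activated. Consequently every activated basic square lies in some grid $G_j(U)$ with $j\le c\log n+O(1)=O(\log n)$, which is precisely the claim.

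There is no genuine obstacle here: the whole argument is a one-shot pigeonhole on coordinate resolution. The only thing that warrants care is making the constant $c$ explicit and checking that the small additive slack needed to force $2^{-i}<n^{-c}$ is absorbed into the asymptotic $O(\log n)$ bound, and that the threshold comparison $25\le 100n$ is valid for all $n$ of interest. I note in particular that neither the smoothness of $S$ nor the load-balancing steps 3(f)--3(g) play any role in bounding the depth of the quadtree.
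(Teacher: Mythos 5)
Your proposal is correct and takes essentially the same route as the paper, which gives a one-line argument that the $O(\log n)$-bit coordinates force the stopping condition in Step~3(e) to hold by depth $O(\log n)$. You have simply filled in the details of that argument: below the coordinate resolution each basic square holds at most one point, so $|TL_i(R)\cap S|\le 25\le 100n$ and no further squares are activated.
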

\begin{proof}
Simply, the points in $S$ have $O(\log n)$-bit coordinates so at depth at most $O(\log n)$ the
                condition in Step 3(e) of $DT-SQUARE(S,U)$ has to be
                satisfied.
 \qed
 \end{proof}

\begin{figure}
\begin{center}
\includegraphics[scale=0.5]{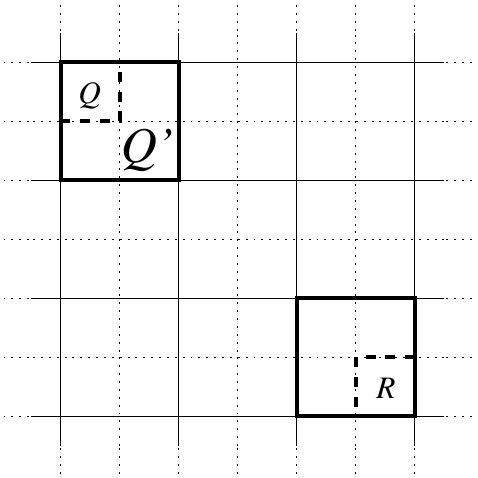}
\end{center}
\caption{An example of the configuration in  the proof of Lemma \ref{lem: correct}.}
\label{fig: correct}
  \end{figure}

\begin{lemma}\label{lem: correct}
The protocol $DT-SQUARE(S,U)$ is correct.
\end{lemma}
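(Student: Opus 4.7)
The plan is to show that the implicit quadtree built by the protocol partitions $U$ into leaves on each of which Step~3(e) correctly computes the intersection of the Voronoi diagram of $S$ with that leaf; this will immediately imply that every relevant dual Delaunay edge is appended to $L$ at least once, after which the sort in Step~3(g) removes duplicates. The workhorse is Lemma~\ref{lem: first}, whose hypothesis I would verify for each leaf $R$ using the $(\tfrac 12, 4\sqrt 2)$-smoothness of $S$. First, I would observe that by construction every point of $U$ belongs to exactly one basic square at which Step~3(e) triggers a local computation rather than a further subdivision, so the leaves partition $U$ and it suffices to argue leaf by leaf.

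The main step is to show that every leaf $R$ at depth $i \ge 1$ contains a point of $S$. Let $R'$ be its parent at depth $i-1$; since the subdivision criterion in Step~3(e) failed for $R'$, we have $|S \cap TL_{i-1}(R')| > 100\,n$. The region $TL_{i-1}(R')$ consists of at most $25$ basic squares of $G_{i-1}$, equivalently at most $100$ basic squares of $G_i$, so by pigeonhole some basic square $Q'$ of $G_i$ inside $TL_{i-1}(R')$ contains more than $n = (n^2)^{1/2}$ points of $S$. A direct coordinate calculation, whose worst case is a diagonal configuration of $R$ and $Q'$ inside $TL_{i-1}(R')$, will then show that the Euclidean distance between $R$ and every basic square of $G_i$ lying inside $TL_{i-1}(R')$ is at most $4\sqrt 2\,\ell$, where $\ell = 1/2^i$ is the side length of $R$. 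The $(\tfrac 12, 4\sqrt 2)$-smoothness of $S$ will then force $R$ itself to contain at least one point of $S$. The base case $i = 0$ is trivial, since $R = U$ contains all $n^2$ input points.

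Once $R$ is known to contain a point of $S$, Lemma~\ref{lem: first} yields that every Voronoi edge of $S$ lying in $R$ is a piece of the bisector of two points of $S \cap TL_i(R)$. Consequently the intersection of the Voronoi diagram of $S \cap TL_i(R)$ with $R$ coincides with the intersection of the Voronoi diagram of $S$ with $R$, so the local computation in Step~3(e) will append to $L$ every Delaunay edge of $S$ dual to a Voronoi edge inside $R$. Since the leaves cover $U$, every Delaunay edge dual to a Voronoi edge of $S$ within $U$ will be appended by at least one leaf, and the sort in Step~3(g) eliminates the copies arising when such an edge corresponds to a Voronoi edge that crosses several leaves.

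The main obstacle is the distance estimate used above. It will require a careful case analysis on which of the four possible positions $R$ occupies inside $R'$, combined with the boundary cases in which $TL_{i-1}(R')$ is truncated because $R'$ is close to the edge of $U$ (such truncation only shrinks $TL_{i-1}(R')$ and therefore can only decrease the worst-case distance). The fact that this worst case equals exactly $4\sqrt 2\,\ell$ is the quantitative reason why the smoothness parameter $d = 4\sqrt 2$ is the right companion to the pigeonhole threshold $100\,n$ used in the subdivision test.
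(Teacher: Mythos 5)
Your proposal is correct and follows essentially the same line of reasoning as the paper's proof: use the failure of the subdivision test at the parent $R'$ to pigeonhole a basic square $Q$ of $G_i$ inside $TL_{i-1}(R')$ with at least $n$ points, verify that $\mathrm{dist}(Q,R)\le 4\sqrt 2\cdot 2^{-i}$, invoke $(\tfrac12,4\sqrt2)$-smoothness to conclude $R\cap S\neq\emptyset$, apply Lemma~\ref{lem: first}, and finish by noting that the leaf squares partition $U$. The only cosmetic difference is that you pigeonhole directly over the (at most) $100$ basic squares of $G_i$ in $TL_{i-1}(R')$, whereas the paper does it in two stages (first the $25$ squares of $G_{i-1}$, then the $4$ children of the heaviest one), and you additionally spell out the boundary-truncation and base cases, which the paper leaves implicit.
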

\begin{proof}
  When the Voronoi diagram of the points of $S$ in $TL_i(U)$
  for  a basic square $R$
  of the grid $G_i(U)$
is computed then there must be square $Q'$ in the grid $G_{i-1}(U)$
that contains at least $100n/25$ points in $S$
and is at distance at most $\frac {\sqrt 2} {2^{i-1}}$
from the basic square in $G_{i-1}(U)$ that is the parent of $R.$
Hence, there is a basic square $Q$ in $G_i(U)$ that is part of
$Q'$ and contains at least $100n/100$ points in $S$; see Fig. \ref{fig: correct}.
By straightforward calculations,
the distance between $Q$ and $R$ is at most $4\sqrt 2\frac 1 {2^i}.$
Thus, by the assumed $(\frac 1 {2}, 4\sqrt 2)$-smoothness property,
the square $R$ contains at least one point in $S.$
It follows from Lemma \ref{lem: first} that the intersection of the Voronoi diagram
of the points of $S$ in $TL_i(R)$
with $R$ yields the Voronoi diagram of
$S$ within $R.$ Hence, the edges appended to the list $L$ are
the edges of the Delaunay triangulation of $S$ dual
to the edges of the Voronoi diagram of $S$ within $U$.
It easily follows by induction on $i$ during forming
the quadtree of active basic squares  that the leaf active basic squares form
a partition of the unit square $U.$
Therefore, for each edge $(u,v)$
of the Delaunay triangulation of $S$ dual to an edge of the Voronoi diagram
of $S$ within $U$ there must exist
a positive integer $i$ and an active
basic square $R$ in $G_i(U)$ such that $R$ does not have any child
active basic squares in $G_{i+1}(U)$ and a part of the bisector
of $u$ and $v$ borders some region in the Voronoi diagram of $S$
within $R$. Hence, the list $L$ is complete.
\qed
\end{proof}

\begin{lemma}\label{lem: active}
For $i=0,1,\dots,O(\log n),$
the number of active basic squares in the grid $G_i(U)$
is $O(n)$ during the performance of  $DT-SQUARE(S,U).$
\end{lemma}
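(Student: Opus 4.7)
The plan is a straightforward double-counting argument on the points of $S$, applied one level above the level of interest. The key observation is that a basic square in $G_i(U)$ is active only if it was spawned as a child of a basic square at level $i-1$ that failed the termination test in Step 3(e), so controlling the number of such ``overfull'' parents controls the number of active children.

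For the base case $i=0$ there is exactly one active basic square (the unit square $U$ itself), so the claim is immediate. For $i \geq 1$, I would first observe that each active basic square in $G_i(U)$ is one of the four children of some basic square $R'$ in $G_{i-1}(U)$ that was active in phase $i-1$ but had $|S \cap TL_{i-1}(R')| > 100 n$; call such a parent \emph{subdivided} and denote their number by $m_{i-1}$.

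Next I would bound $m_{i-1}$ by counting point-region incidences. The region $TL_{i-1}(R')$ is the union of at most $5 \times 5 = 25$ basic squares of $G_{i-1}(U)$, and symmetrically any fixed basic square of $G_{i-1}(U)$ lies in $TL_{i-1}(R')$ for at most $25$ distinct $R'$ (exactly those in its own $5 \times 5$ neighborhood). Hence each point of $S$ contributes to at most $25$ of the sets $S \cap TL_{i-1}(R')$, giving
\[
100 n \cdot m_{i-1} \;<\; \sum_{R'\text{ subdivided}} \bigl|S \cap TL_{i-1}(R')\bigr| \;\leq\; 25 \, |S| \;=\; 25 n^2,
\]
so $m_{i-1} < n/4$. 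Since each subdivided parent spawns exactly four children in $G_i(U)$, the number of active basic squares at level $i$ is less than $n$, i.e.\ $O(n)$; by Lemma~\ref{lem: logn} the range $i = 0,1,\dots,O(\log n)$ covers all levels reached by the protocol.

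There is no real obstacle beyond pinning down the overlap constant for $TL_{i-1}$, which is immediate from its $5 \times 5$ structure (and the clipping near $\partial U$ only decreases it). Notably, the smoothness hypothesis plays no role in this bound; it is used elsewhere (Lemma~\ref{lem: correct}) to guarantee correctness of the local computations, whereas the counting here relies purely on the termination threshold $100n$ in Step 3(e).
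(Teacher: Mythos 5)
Your proof is correct, and it uses the same basic ingredients as the paper's: every active square at level $i\ge 1$ comes from a parent $R'$ at level $i-1$ whose neighborhood $TL_{i-1}(R')$ holds more than $100n$ points; the $TL_{i-1}$ neighborhoods overlap with multiplicity at most $25$; and each parent spawns at most four children. The execution differs in one step. The paper first pigeonholes inside each overfull $TL_{i-1}(R')$ to extract a ``heavy'' basic square of $G_{i-1}(U)$ with at least $100n/25=4n$ points, then bounds the number of heavy squares by $n^2/(4n)=n/4$ and multiplies by the $25$-fold overlap and the factor of $4$, which gives at most $25n$ active squares. You instead double-count point--neighborhood incidences over the subdivided parents directly, $100n\cdot m_{i-1}<25n^2$, so $m_{i-1}<n/4$ and hence fewer than $n$ active squares, without ever introducing heavy squares as an intermediate object. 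Your version is slightly tighter and arguably cleaner, but the underlying count is the same and both give $O(n)$. Your explicit remarks about the base case $i=0$ and about the smoothness hypothesis playing no role in this particular lemma are correct; the paper leaves both implicit.
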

\begin{proof}
We argue similarly as at the beginning of the proof of Lemma \ref{lem: correct}.
If $R$ is an active basic square in $G_i(U)$
different from the unit
square $U$ then there must exist a basic square $Q$ in $TL_{i-1}(R')$,
where $R'$ is the parent of $R$
in $G_{i-1}(U),$
such that $Q$ contains
at least $100n/25$ points in $S.$
Now it is sufficient to note that: (i) there are at most
$O(n)$ basic squares in $G_{i-1}(U)$ that contain
at least $100n/25$ points in $S$; (ii) there are at most $O(1)$
basic squares $Q'$ in $G_{i-1}(U)$ different from $R'$ such that
$Q$ is included in $TL_{i-1}(Q')$; (iii) an active basic square in $G_{i-1}(U)$
can be a parent to at most four active basic squares in $G_i(U).$
\qed
\end{proof}

\begin{lemma}\label{lem: time}
The protocol $DT-SQUARE(S,U)$ can be implemented in $O(\log n)$
rounds on the congested clique.
\end{lemma}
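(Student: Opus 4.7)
The plan is to show that the outer loop of $DT-SQUARE(S,U)$ runs for only $O(\log n)$ iterations and that a single iteration can be implemented in $O(1)$ rounds; multiplying the two bounds yields the claim. The iteration bound is exactly Lemma \ref{lem: logn}, so the analysis reduces to verifying, for a generic level $i$, that substeps 3(a)--3(g) each consume only $O(1)$ rounds. Substep 3(a) is purely local computation. Substeps 3(b) and 3(g) sort $O(n^2)$ keys and so run in $O(1)$ rounds by Fact \ref{fact: sorting}; the broadcasts appended to 3(b) and performed inside 3(c) each amount to one $O(\log n)$-bit message per ordered pair of nodes, hence fit a single round of the model.

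For the remaining substeps I would first invoke Lemma \ref{lem: active}, which bounds the number of active basic squares in $G_i(U)$ by $O(n)$. The assignment-balancing substep 3(f) then maintains the invariant that every node is responsible for $O(1)$ active squares; it is implemented by having each node broadcast its current count and then apply a fixed deterministic reassignment rule locally. Consequently, in substep 3(d) each node issues $O(1)$ queries per active square it represents, and since $|TL_i(R)|\le 25$, only $O(1)$ query messages per node in total, which can be routed in $O(1)$ rounds via Fact \ref{fact: routing}.

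The main obstacle lies in substep 3(e). On the receiving side the analysis is routine: a node handles $O(1)$ active squares, and for each such square it chooses to collect at most $100n$ points, so it receives $O(n)$ points in total, well within the relaxed IDT budget. The sending side is subtler, because a fixed basic square $W$ of $G_i(U)$ may be contained in $TL_i(R)$ for as many as $25$ distinct active squares $R$, and these active squares can be owned by up to $25$ different nodes, so a single point may be requested by up to $25$ different recipients. The resolution is that $25$ is a constant determined purely by the quadtree geometry: every node is asked to send $O(n)$ messages in total, and relaxed IDT (Fact \ref{fact: routing}), invoked a constant number of times, disposes of them in $O(1)$ rounds.

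Combining the $O(1)$-round bounds for substeps 3(a)--3(g) with the $O(\log n)$ bound on the number of quadtree levels from Lemma \ref{lem: logn} yields the claimed $O(\log n)$-round implementation of $DT-SQUARE(S,U)$ on the congested clique. The single place where care is needed is the constant-factor excess over the strict $n$-messages-per-node budget in substep 3(e); everywhere else the subroutines of Lenzen summarized in Facts \ref{fact: routing} and \ref{fact: sorting} apply off the shelf.
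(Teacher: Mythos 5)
Your proposal is correct and follows essentially the same route as the paper: bound the number of quadtree levels by $O(\log n)$ via Lemma~\ref{lem: logn}, use Lemma~\ref{lem: active} together with the balanced assignment in Step~3(f) to guarantee $O(1)$ active squares per node, argue that each basic square's points are requested by only $O(1)$ nodes so the point-delivery traffic in Step~3(e) is $O(n)$ per node in each direction, and dispatch all communication with Facts~\ref{fact: routing} and~\ref{fact: sorting}. The only place you are slightly more terse than the paper is Step~3(g): the paper also records why $|L|$ stays at $O(n^2)$, namely that an edge $(u,v)$ can be appended for at most $O(1)$ squares $R$ because both $u$ and $v$ must lie in $TL_i(R)$, so the list holds $O(1)$ copies of each Delaunay edge; this is worth stating explicitly, but it does not change the structure of your argument.
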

\begin{proof}
Steps 1, 2 can be easily implemented in $O(1)$ rounds.
By Lemma \ref{lem: logn}, the block under the for loop in Step 3 is iterated
$O(\log n)$ times. It is sufficient to show that this block (a-g)
can be implemented in $O(1)$ rounds.

Step 3(a) can be performed totally locally.

The sorting of the prefixed representations of points in $S$ in Step 3(b)
can be done in $O(1)$ rounds by Fact 3.

For each node, the range of the numbers of the basic squares in $G_i(U)$ holding
the prefixed representations of points in $S$ at the node after the sorting of the prefix
representations of the points can be specified by two $O(\log n)$-bit numbers.
Hence, all nodes can inform all other nodes about their ranges in $O(1)$ rounds.
Thus, Step 3(b) requires $O(1)$ rounds in total.

The situation described in Step 3(c) can happen for at most $n$ basic squares
$W$ in $G_i(U).$ It requires sending by each node at most two different messages
to at most $n$ nodes in total and also receiving at most $n$ messages.
Hence, Step 3(c) can be implemented in $O(1)$ rounds by
using the routing protocol from Fact 2.

In Step 3(d), for each active basic square, a node representing the
square has to send $O(1)$ $O(\log n)$-bit queries to $O(1)$ other
nodes.  The total number of active basic squares in $G_i(U)$ is $O(n)$
by Lemma \ref{lem: active}.  Hence, by using the routing protocol from Fact 2
this task can be done in $O(1)$ rounds.

Consider Step 3(e). Answering the queries sent in Step 3(c) can be
done by local computations and the routing reverse to that in Step
3(d) in $O(1)$ rounds. After that each node for each active square
assigned to it determines locally if the criterion for computing the
Voronoi diagram of $S$ within $R$ is satisfied. If so the node sends
messages asking the nodes holding the prefixed representations of points
in the squares of $TL_i(R)$
for sending the
points. This requires sending $O(n)$ messages for each active basic
square in $G_i(U)$. Since the total number of such squares is
$O(n)$ by Lemma \ref{lem: active}  and each node represents
$O(1)$ active squares in  $G_i(U)$, it can be accomplished in
$O(1)$ rounds by Fact \ref{fact: routing}.
Delivering the requested points to the nodes
representing respective active basic squares can also be done in $O(1)$
rounds for the following reasons.  For each active basic square the node
representing it needs to receive $O(n)$ points. Furthermore, by Lemma
\ref{lem: active} there are $O(n)$ active basic squares in $G_i(U)$.
Hence, since the active squares are assigned to the $n$ nodes in a
balanced way, each node needs to receive $O(n)$ points. Also, the points
contained in a given basic square in $G_i(U)$ can be requested by at
most $O(1)$ nodes since there are at most $O(1)$ active basic squares
behind these requests to the given square.  Since the sorted prefixed representations of the
points in $S$ are divided between the nodes in a balanced way, each node
needs to send $O(n)$ points, each of them to $O(1)$ nodes. We conclude that this
part of Step 3(d) can be implemented in $O(1)$ rounds by
Fact~\ref{fact: routing}. The
remaining parts of Step 3(d) are done locally.

Step 3(f) requires sending and receiving by each node
$O(1)$ messages so it can be done in $O(1)$ rounds.

Consider an edge $(u,v)$ dual to some edge of the Voronoi diagram of the points of
$S$ included in $TL_i(R)$ within an active basic square $R$ in
$G_i(U).$  The edge can be appended to $L$ at most for $O(1)$ different squares
$R$ as $u,\ v$ are in $TL_i(R).$ Therefore, the list $L$ may contain
at most $O(1)$ copies of an edge of the Delaunay triangulation of $S$
so Step 3(g) can be implemented in $O(1)$ rounds by using the sorting
protocol
from Fact~\ref{fact: sorting}.
\qed
\end{proof}

Lemmata \ref{lem: correct}, \ref{lem: time} yield our first main result.

\begin{theorem}\label{theo: dlt}
    Let $S$ be a $(\frac 12,4\sqrt 2)$-smooth set of $n^2$ points with $O(\log
    n)$-bit coordinates in an orthogonal unit square, held in $n$-point  batches
    at the $n$ nodes of the congested clique. The set of edges
  of the Delaunay triangulation of $S$ dual to the edges of the
  Voronoi diagram of $S$ within the unit square can be constructed in
  $O(\log n)$ rounds on the congested clique.
\end{theorem}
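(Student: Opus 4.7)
The proof is essentially an assembly of the preceding lemmas, so my plan is to simply invoke them and verify that they cover both correctness and complexity of the protocol $DT$-$SQUARE(S,U)$. Concretely, I would observe that $DT$-$SQUARE(S,U)$ takes as input exactly the data described in the theorem, namely a $(\tfrac{1}{2},4\sqrt{2})$-smooth set $S$ of $n^2$ points with $O(\log n)$-bit coordinates in the unit square $U$, distributed in $n$-point batches across the $n$ clique nodes, and produces the list $L$ of Delaunay edges dual to Voronoi edges of $S$ within $U$. Correctness of this list is exactly the content of Lemma \ref{lem: correct}, and the round bound is exactly the content of Lemma \ref{lem: time}. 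Together they immediately give the claimed $O(\log n)$ bound.

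The logical skeleton I would present is therefore: by Lemma \ref{lem: correct}, after $DT$-$SQUARE(S,U)$ terminates, $L$ contains precisely the edges of the Delaunay triangulation of $S$ that are dual to edges of the Voronoi diagram of $S$ within $U$; by Lemma \ref{lem: time}, the protocol executes in $O(\log n)$ rounds on the congested $n$-clique; the output is already held in $O(n)$-edge batches at consecutive nodes as required. No further work is needed at the level of the theorem itself.

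If there is a delicate point left to emphasize, it is to make explicit where each of the three quantitative ingredients enters: the depth bound $O(\log n)$ on the implicit quadtree comes from the $O(\log n)$-bit coordinates via Lemma \ref{lem: logn}; the $(\tfrac{1}{2},4\sqrt{2})$-smoothness is what guarantees, through the argument in Lemma \ref{lem: correct}, that whenever a basic square $R$ at level $i$ satisfies the local $100n$-population criterion in Step 3(e) it still contains an input point, so that Lemma \ref{lem: first} applies; and the bound of $O(n)$ on active basic squares per level from Lemma \ref{lem: active} is what keeps the per-round routing within the capacity of Fact \ref{fact: routing}. The main conceptual obstacle has already been absorbed into those lemmas, in particular into Lemma \ref{lem: time}, where one must simultaneously bound the messages per node in Steps 3(b)--(f); at the level of the theorem itself there is nothing left to grind through.
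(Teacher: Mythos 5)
Your proposal matches the paper exactly: the paper proves Theorem~\ref{theo: dlt} simply by combining Lemma~\ref{lem: correct} (correctness of the list $L$) with Lemma~\ref{lem: time} (the $O(\log n)$-round implementation of $DT$-$SQUARE$). The extra commentary you add about where Lemmata~\ref{lem: logn}, \ref{lem: first}, and \ref{lem: active} enter is accurate and harmless, but is already absorbed into those lemmas as you note.
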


\begin{lemma}\label{lem: dtvd}
    Let $S$ be defined as in Theorem \ref{theo: dlt}. Suppose that a list
  $L$ of the edges of the Delaunay triangulation of $S$ dual to the
  edges of the Voronoi diagram of $S$ within the unit square is held
  in $O(n)$-edge batches at the $n$ nodes of the congested clique. The
  Voronoi diagram of $S$ within the unit square can be constructed in
  $O(1)$ rounds on the congested clique.
  \end{lemma}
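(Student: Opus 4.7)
The plan is to exploit Voronoi--Delaunay duality: each $L$-edge $(u,v)$ corresponds to a Voronoi edge lying on the perpendicular bisector $\ell_{uv}$ of $u,v$, clipped to $U$ and bounded by the circumcenters of the Delaunay triangles of $S$ containing $(u,v)$. I would (a) detect the relevant Delaunay triangles as $3$-cycles in $L$, (b) compute their circumcenters and route them to their incident $L$-edges, and (c) assemble each Voronoi edge locally at the node holding the corresponding $L$-edge.

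The crucial observation for (a) is that if the circumcenter of a Delaunay triangle $(u,v,w)$ lies inside $U$, then all three of its incident Voronoi edges emanate from a point inside $U$ and hence intersect $U$, so each of $(u,v)$, $(u,w)$, $(v,w)$ is in $L$ and the triangle is captured by a $3$-cycle in $L$; in particular every Voronoi vertex of the diagram restricted to $U$ is detected. To enumerate the $3$-cycles efficiently I would duplicate every $L$-edge into two oriented copies and sort all $O(n^2)$ copies lexicographically by (source vertex, angle of the edge around the source) via Fact~\ref{fact: sorting} in $O(1)$ rounds; the angularly next oriented edge $(u \to w)$ after $(u \to v)$ is then the physically next record in the sorted sequence, with a further $O(1)$-round routing phase using Fact~\ref{fact: routing} resolving cyclic wrap-around at the source-block boundaries. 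Each resulting candidate triple $(u,v,w)$ is validated by a single routed query asking whether $(v,w) \in L$; for each valid triangle the hosting node locally computes the circumcenter $c$ and, for each incident edge $(u,v)$, a direction sign along $\ell_{uv}$ (namely, the sign of the signed position of $c$ minus the signed position of the projection of the third vertex $w$ onto $\ell_{uv}$) that indicates along which ray from $c$ the Voronoi edge continues. A final $O(1)$-round routing phase delivers each triangle's three (circumcenter, direction) records to the nodes holding its three edges.

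For (c), each node combines, for every $L$-edge $(u,v)$ it holds, the chord $\ell_{uv} \cap U$ (computable locally from $u$ and $v$) with the $0$, $1$, or $2$ (circumcenter, direction) records it has received, yielding the Voronoi edge within $U$ by a small case analysis: two records clip the segment between the two circumcenters to the chord; one record with an in-$U$ circumcenter extends from that circumcenter in the given direction to the corresponding chord endpoint; and the remaining configurations collapse to the whole chord, since if no circumcenter lies in $U$ then both endpoints of the underlying Voronoi edge sit outside $U$ and $(u,v) \in L$ forces the chord to be non-empty and to be covered by the edge. The main obstacle I anticipate is precisely this asymmetry: Delaunay triangles whose circumcenter falls outside $U$ need not appear as $3$-cycles in $L$, because some of their edges may themselves fail to lie in $L$, so a naive $3$-cycle search cannot reconstruct every adjacent triangle; the observation above saves us by ensuring that all Voronoi vertices inside $U$ are still detected, and the direction data attached to any detected adjacent triangle suffices to pick the correct $\partial U$-clipping side without ever needing to know an undetected circumcenter. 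Since all sorting and routing steps are $O(1)$ rounds by Facts~\ref{fact: routing} and~\ref{fact: sorting}, the lemma follows.
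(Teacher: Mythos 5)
Your approach is genuinely different from the paper's, although both hinge on the same key step of duplicating $L$ into oriented copies and sorting lexicographically by (source vertex, angle around the source) using Fact~\ref{fact: sorting}. The paper then finishes in one stroke: the sorted list gives, for each site $u$, the Delaunay edges of $L$ incident to $u$ in cyclic order, and the paper reconstructs the Voronoi \emph{cell} of $u$ directly by intersecting the perpendicular bisectors of angularly consecutive pairs. The intersection of the bisectors of $(u,a)$ and $(u,c)$ \emph{is} the circumcenter of $(u,a,c)$, so no separate triangle-detection phase, circumcenter routing, or direction encoding is needed: one simply tests whether the intersection lies in $U$ and, if not, completes the cell boundary with the appropriate arc of $\partial U$. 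You instead output Voronoi \emph{edges}, which forces you to enumerate Delaunay triangles as $3$-cycles, route circumcenters to their incident edges, and do a per-edge case analysis. That is also $O(1)$ rounds, but it is considerably longer and introduces two weak points.

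The concrete error is the direction sign. Parametrize $\ell_{uv}$ by a signed coordinate $t$, and let $t_m$, $t_{w'}$, $t_c$ be the coordinates of the midpoint $m$ of $uv$, the projection $w'$ of $w$, and the circumcenter $c$. Along the unit direction $\vec d$ of $\ell_{uv}$ one has $\tfrac{d}{dt}\bigl(|p-u|^2-|p-w|^2\bigr)=2\,\vec d\cdot(w-u)$, so the Voronoi edge of $(u,v)$ leaves $c$ in the direction with $\vec d\cdot(u-w)>0$, i.e.\ in the direction $\mathrm{sign}(t_m-t_{w'})$, \emph{not} $\mathrm{sign}(t_c-t_{w'})$ as you claim. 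The two differ precisely when $t_{w'}$ lies strictly between $t_m$ and $t_c$ (sufficiently obtuse Delaunay triangles), and this can happen even with $c\in U$; your ``one record, circumcenter in $U$'' branch then clips to the wrong endpoint of the chord. A second, subtler concern is that your $3$-cycle test can accept non-triangles: if $u$ has Delaunay neighbours $a,b,c$ in angular order with $(u,b)\notin L$ and $b$ inside triangle $uac$, then $(a,c)$ can be a genuine Delaunay edge (hence possibly in $L$) although $(u,a,c)$ is not a Delaunay triangle, so you would ship the circumcenter of $(u,a,c)$, which is not a Voronoi vertex. One can show such a spurious circumcenter must lie \emph{outside} $U$ (if it were inside, the segment from $u$ to it would cross the Voronoi edge of $(u,b)$ inside $U$, contradicting $(u,b)\notin L$), but your ``two records'' branch trusts both received circumcenters regardless of whether they lie in $U$. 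The paper's cell-based variant avoids both problems because it only uses a bisector intersection when it falls inside $U$, which is exactly when it is guaranteed to be a true Voronoi vertex.
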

\begin{proof}
  Double the list $L$ by inserting for each $(u,v)\in L$ also
  $(v,u)$ into $L.$
  For each edge $(u,v)$ determine locally
  an $O(\log n)$-bit representation of the angle
  $\beta(u,v)$
  between $(u,v)$ and the horizontal line passing through $u$.
  For instance, the representation can specify the tangent
  of the angle by 
  $(v_y-u_y,  v_x-u_x).$
  Sort the edges $(x,y)$ by $(x,\beta(u,v))$,
  letting the nodes to compare the angle tangents locally,
    using the sorting protocol from Fact 3.
  In this way, for each point $u\in S,$  a sub-list of all edges of the Delaunay
  triangulation incident to $u$  in the angular order is created. Some of the sub-lists can stretch through
  several nodes of the clique network.
  Given the edges of the Delaunay triangulation incident to $u$
  in the angular order, the edges of the Voronoi region of $u$ within the unit square can be easily
  produced. This is done by intersecting the bisectors of $u$ and the other endpoints of consecutive
  edges incident to $u$ in the angular order as long as the intersection of two consecutive
  bisectors is within the unit square. Otherwise, the border of the region of $u$
  has to be filled with the fragment of the perimeter of the unit square between the
  intersections of the two bisectors with the perimeter.
  \qed
\end{proof}

Theorem \ref{theo: dlt} combined with Lemma \ref{lem: dtvd} yield
our second main result.

\begin{theorem}\label{theo: vor}
  Let $S$ be a $(\frac 12,4\sqrt 2)$-smooth set of $n^2$ points with $O(\log
  n)$-bit coordinates in an orthogonal unit square held in $n$-point
  batches at the $n$ nodes of the congested clique. The Voronoi
  diagram of $S$ can within the unit square be constructed in $O(\log
  n)$ rounds on the congested clique.
\end{theorem}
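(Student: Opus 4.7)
The plan is to obtain Theorem \ref{theo: vor} as an immediate composition of the two preceding results, so the proof is essentially a reduction rather than a fresh argument. First I would invoke Theorem \ref{theo: dlt} on the input set $S$. Since $S$ is already in the required $(\frac12, 4\sqrt2)$-smooth form and distributed in $n$-point batches across the $n$ clique nodes, this produces, in $O(\log n)$ rounds, a list $L$ containing exactly those edges of the Delaunay triangulation of $S$ that are dual to edges of the Voronoi diagram of $S$ within the unit square. Crucially, Theorem \ref{theo: dlt} delivers $L$ already in $O(n)$-edge batches across the nodes, which is precisely the input format assumed by Lemma \ref{lem: dtvd}.

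Second, I would feed $L$ directly into the procedure of Lemma \ref{lem: dtvd}. That lemma doubles each edge to both orientations, computes the angular ordering of the edges incident to each Delaunay vertex by a single invocation of the sorting protocol of Fact \ref{fact: sorting}, and then reconstructs each Voronoi region locally by intersecting consecutive bisectors and clipping to the perimeter of $U$ when two consecutive bisectors meet outside the unit square. The lemma guarantees that this entire conversion uses only $O(1)$ rounds.

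Summing the two stages yields $O(\log n) + O(1) = O(\log n)$ rounds on the congested clique, which is the bound claimed by the theorem. The only point that needs verifying is the compatibility of the two interfaces: the output distribution promised by Theorem \ref{theo: dlt} (namely $O(n)$-edge batches across consecutive nodes) matches the input distribution required by Lemma \ref{lem: dtvd} (edges held in $O(n)$-edge batches at the $n$ nodes), so no extra rerouting step is needed between the two phases. Since all the genuine technical work, namely the quadtree construction based on the weak smoothness assumption and the angular-sort-plus-local-clipping reconstruction, is already encapsulated inside the referenced statements, there is no real obstacle left; the proof reduces to stitching the two black boxes together and observing that the round counts add.
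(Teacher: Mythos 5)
Your proposal matches the paper exactly: the paper derives Theorem~\ref{theo: vor} by stating that it follows from combining Theorem~\ref{theo: dlt} with Lemma~\ref{lem: dtvd}, and you spell out precisely that composition, including the interface compatibility of the $O(n)$-edge batches. Correct and identical in approach.
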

\junk{
\section{The randomized approach}

In \cite{GS}, Guibas and Stolfi defined the so-called {\em lifting
mapping} $\mu$ of the Euclidean two dimensional space $E^2$ into the
three dimensional one $E^3 $ by $\mu(x, y) = (x, y,x^2+y^2).$ Let $S$
be a finite set of points in $E^2.$ It is well known that there is a
one- to-one correspondence between the edges of the lower part of the
convex hull of $\mu (S)$ and the edges of the Delaunay triangulation
of $S$ \cite{E}. Hence, given the convex hull of $\mu(S)$, one can easily
compute the Delaunay triangulation or the Voronoi diagram of $S$.  In
turn, the problem of constructing a convex hull in $E^3$ is equivalent
to the problem of constructing the intersection of half-spaces in $E^3$
by the so-called {\em dual transform.}  It maps a point $p = (p_1, p_2,
p_3)$ different from the origin into the plane defined by $p_lx_l + p_2x_2
+ p_3x_3 - 1 = 0$ and {\em vice versa}.  It is extended to include sets
of points (planes, respectively) in a natural way.

By the lifting mapping and the dual transform the problem of
constructing the Voronoi diagram of a set of $N$ points in $E^2$
reduces to that of constructing the common intersection of $N$
half-spaces in $E^3$ in one-to-one correspondence with the points in
$E^2.$

The half-space intersection problem is one of several geometric
problems that admit an efficient recursive solution by random
sampling.  The general idea of this method is to first solve the problem
on a randomly chosen subset $R$ of the input $S$ in order to
partition the input into smaller problems. Clarkson proved that for
several geometric problems, the expected size of each subproblem is
$O(|S|/|R|)$ and the expected total size of the subproblems is
$O(|S|)$ \cite{C88}. Reif and Sen observed that Clarkson's bounds are
  not sufficient to design efficient recursive parallel algorithms.
  In particular upper bounds on the maximum size of a subproblem
  and the total size of subproblems that hold with high probability
  are needed \cite{RS}. They succeeded to derive such refined upper
  bounds for their parallel method of determining the intersection
  of a finite number of half-spaces in $E^3$ \cite{RS}.

  The idea of partitioning the intersection problem for the
  input set $S$ of half-spaces into smaller problems presented in \cite{RS}
  is as follows.  Assume that a point $p^*$  inside the intersection
  of the half-spaces and that a ``good'' sample $R$ of 
  half-spaces are provided. The intersection $I_R$ of the half-spaces
  in $R$ is determined by a simple parallel algorithm.
  Next, the faces of $I_R$ are triangulated.
  Then, $I_R$ is divided into tetrahedrons, called {\em cones}, 
  induced by the triangles and the point $p^*;$ see Fig. \ref{fig: cone}.
  After that, for each cone the list of intersecting
  half-spaces in $S \setminus R$ is computed.
  In order to obtain a work efficient recursive parallel algorithm,
  half-spaces not
  contributing to the intersections with a cone are
  removed from its list in the next filtering stage.

  \begin{figure}
\begin{center}
\includegraphics[scale=0.7]{cone3}
\end{center}
\caption{An example of two cones induced by a triangulated face of $I_R$.}
\label{fig: cone}
  \end{figure}
  
  We adapt the initial phase of the parallel
  intersection algorithm due to Reif and Sen to the model of congested
  clique. We assume that there are $n^2$ input half-spaces
  held in $n$-half-space batches at the $n$ nodes of
  the congested clique.  Our main idea is to draw
  large samples of $n$ half-spaces in order to
  partition the original problem into subproblems
  of maximum size $O(n \log n)$ that can be relatively
  easily solved locally. In this way, we avoid the pitfalls
  of the recursion and the need for the complicated filtering
  stage.
  
A sample $R$ of $|S|^{\varepsilon}$ half-spaces
 from $S$ is termed {\em good} in \cite{RS}  if it satisfies following
 two conditions for some positive constants $k_{tot},\ k_{max}$:
 \begin{enumerate}
\item the total number
 of the intersections
 between
 the half-spaces
 in $S \setminus R$ and the
cones induced
 by $R$ is less than $k_{tot}|S|;$
\item
  the maximum
 number of half-spaces intersecting
 a cone induced
 by $R$ is less
 than $\varepsilon k_{max}|S|^{1-\varepsilon}\log |S|.$
 \end{enumerate}

 The following fact follows immediately from Lemma 4.1
 in \cite{RS} since the number $c_R$ of cones
 does not exceed $|R|.$
 
 \begin{fact} \cite{RS}\label{fact: good}
 A sample $R$ of $|S|^{\varepsilon}$ half-spaces
 drawn uniformly at random from $S$ is good
 with probability at least $\frac 12.$
 \end{fact}

 Our protocol for the intersection of half-spaces relies
 on Fact \ref{fact: good}.

 \par
 \vskip 5pt
 \noindent
        {\bf protocol} $IHS(S)$
        \par
 \noindent
     {\em Input:} A set $S$ of $n^2$ half-spaces $H_1,H_2,\dots , H_{n^2}$ in $E^3$, each specified
     by
     $O(\log n)$ bits, 
held in $n$-half-space batches at
the $n$ nodes of the congested clique and a point $p*$ inside
the intersection $I_S$ of all the half-spaces known to the nodes
of the clique.
\par
\noindent
    {\em Output:} The faces of the intersection $I_S$ of the half-spaces $H_1,H_2,\dots , H_{n^2}$
    given in the consecutive nodes of the congested clique.

    \begin{enumerate}
    \item
      {\em Random permutation of half-spaces}: Each node assigns a random
      integer key between $1$ and $n^{\alpha +4}$ to each half-space in its batch.
      Then the half-spaces are sorted by the keys in $O(1)$ rounds
      by using the protocol
      from Fact~\ref{fact: sorting}.
    \item
      {\em Sampling}: The first node samples $n$ numbers out of $\{1,\dots, n^2\}$
      uniformly at random and broadcasts them to the other nodes in $2$ rounds
      (e.g-, it sends the $i$-th selected number to the $it$-th node ,$i=1,\dots,n,$
      and in the second round the $i$-th node sends this number to all other nodes).
    \item
      The nodes having in their batches the half-spaces indexed by the $n$ sampled
      numbers send these half-spaces to the first node in $2$ rounds (e.g., a node holding
      an $i$-th  sampled  half-space sends it to the $i$-th node and in the second
      round the $i$-th node sends it to the first node).
    \item
      The first node computes the intersection $I_R$ of the $n$ sampled half-spaces locally.
      It also triangulates the faces of the intersection $I_R$ and divides the intersection
      polyhedron into $c_R\le n$ cones $C_j$ that are tetrahedrons sharing the common
      apex $p^*$ inside $I_S\subseteq I_R$, determined by a triangle on a face of $I_R.$
    \item
      The first node sends the $c_R< n$ cones $C_j$ to the other nodes in $O(1)$ rounds.
    \item
      Each node $i$ computes the intersection $I_i$ of the half-spaces in its batch.
      Next, for each $j=1,.\dots,c_R,$ the node computes the intersection
      $C_j\cap I_i$ and if the intersection is different from $C_j$ it sets $s_{i,j}$
      to the number of its half-spaces in $S\setminus R$ intersecting $C_j$
      otherwise $s_{i,j}$ is set to $0.$
      Next, the $i$-th node sends $s_{i,j}$ to the $j$-th node
      and the sum $\sum_{j=1}^{c_R} s_{i,j}$ to the first node
      in at most $2$ rounds.
    \item
      {\em Verification of the goodness conditions}:
      If the sum $\sum_{i=1}^n\sum_{j=1}^{c_R} s_{i,j}$ exceeds
      $k_{tot}n^2$ then the first node orders going back to Step 2.
      Also, for $j=1,\dots, c_R,$ if the sum $\sum_{i=1}^{n} s_{i,j}$
      exceeds $k_{max}n \log n$
      then the $j$-th node orders
      going back to Step $2.$
    \item
      Each node $i$,
      for each $j$, where $s_{i,j}\neq 0$,
      sends the final intersection $C_j\cap I_i$ of its half-spaces with
      $C_j$ to the node $j$ in $O(\log n)$ rounds.
    \item
      Each node $j$ computes the faces of
      the intersection of the obtained intersections $C_j\cap I_i$, $i=1,\dots,n,$ i.e., it
      determines the faces of the intersection $C_j\cap I_S$
      locally. Next, the node discards the edges $e$ of the faces of $C_j\cap I_S$
      that in particular border the faces of $C_j\cap I_S$
      placed on the three sides of $C_j$ incident to $p^*$
      but for the case when $e$ is a part of an edge of a face of $I_R.$

     \item
      {\em Gluing the intersections $C_j\cap I_S$ $, j=1,\dots , c_R,$}:
      For $j=1,\dots ,c_R,$ each non-discarded edge of each face of $C_j\cap I_S$
      is assigned an $O(\log n)$-bit key consisting of two components.
      The first component encodes normalized coefficients of the plane (i.e., the two-dimensional
      hyperplane in $E^3$)
      the face is placed on. The second coefficient encodes normalized
      coefficients of the line (i.e., the one-dimensional hyperplane in $E^3$)
        the edge is placed on.
      Next, the aforementioned edges of the faces of the intersections  $C_j\cap I_S$,  $j=1,\dots , c_R,$
      are sorted by the two-component keys. 
      In result, the edges of the faces of the intersections
      placed on the same plane land in a single node or a series
      of consecutive nodes of the congested clique. In particular,
      the edges of the aforementioned faces that are placed on
      the same plane and the same line land in a single node or a series of
      consecutive nodes of the congested clique. This enables
      the determination of the edges of the convex faces of
      $I_S$ that  result
      from gluing the faces of $C_j\cap I_S$, $j=1,\dots , c_R,$
      sharing the same planes, respectively.
      \end{enumerate}

\junk{
\begin{fact} \label{fact: chernoff}(multiplicative Chernoff lower bound)
  Suppose $X_1, ..., X_n$ are independent random variables
  taking values in $\{0, 1\}.$
  Let X denote their sum and let $\mu= E[X]$
  denote the sum's expected value. Then, for any $\delta \in [0,1],$
  $Prob(X\le (1-\delta)\mu)\le e^{-\frac {\delta^2 \mu}2}$ holds.
  Similarly, for any $\delta \ge 0,$
  $Prob(X\ge (1+\delta)\mu)\le e^{-\frac {\delta^2 \mu}{2+\delta}}$ holds.  
  \end{fact}

We shall say that an event dependent on $n^2$ input
basic geometric objects in the two or three dimensional Euclidean spacer
     holds with high probability (w.h.p.) if its probability is
     at least $1-\frac 1 {n^{\alpha}}$ asymptotically, (i.e., there is
     an integer $n_0$ such that for all $n\ge n_0,$ the probability is
     at least $1-\frac 1 {n^{\alpha}}$), where $\alpha $ is any
     constant not less than $2$.
}
    
    \begin{lemma}\label{lem: ihs}
      The protocol $IHS(S)$ can be implemented in $O(\log n)$ rounds w.h.p.
    \end{lemma}
    \begin{proof}
      Step 1 can be implemented in $O(1)$ rounds by using the sorting protocol
      from Fact 3.
      Note that it yields a random permutation
      of the half-spaces with probability at least
      $1-\frac 1 {n^{\alpha}}$ (the probability
      that a given pair of half-spaces gets the same
      key is at most $\frac 1 {n^{\alpha+4}}$
        and there are $n^4$ such pairs). Each of Steps 2, 3 can be easily done in $2$ rounds.
      Step 4 is done totally locally.
      Since each cone $C_j$ is encoded with $O(\log n)$ bits,
      Step 5 can be easily implemented in $O(1)$ rounds. E.g.,
      for $j=1,\dots c_R,$ the first node sends $C_j$ to the node $j$ in $O(1)$
      rounds and then the latter node sends $C_j$ to other nodes in $O(1)$ rounds.
      In Step 6, besides local computations each node needs to send an
      $O(log n)$-bit message to at most $n-1$ other nodes. Thus, this step takes
      $O(1)$ rounds. In Step 7, the goodness conditions are verified
      locally. Depending on the results of the verifications
      some nodes might need to send $O(1)$ bit message to the other
      ones. Hence, this step can be implemented easily in $1$ round.
      By Fact \ref{fact: good}, Steps 2 through 7 need to be repeated
      $O(\log n)$ times w.h.p. Thus Steps 1 through 7 require $O(\log n)$
      rounds totally w.h.p.

      Step 8 requires more analysis. We may assume that the second goodness
      condition holds.
      Hence, we have $\sum_{i=1}^n s_{i,j}\le k_{max}n \log n$ holds
      for each $j=1,\dots,c_R .$ Since the half-spaces are spread uniformly
      at random in the $n$ batches, the expected value $\mu$ of
      $s_{i,j}$ is $O(\log n).$
      Now, it follows from the second multiplicative Chernoff bound
      in Fact \ref{fact: chernoff} with $\delta$ enough large so  $\mu \delta =\Theta (\log n)$ holds
      that for a given $i\in [n], j\in [c_R],$ $s_{i,j}=O(\log n)$ w.h.p.
      Hence, each node $i$ has  to send $O(n\log n)$ $O(\log n)$-bit
      messages w.h.p. Also, for $j\in [c_R],$ the node representing $C_j$
      receives $O(n\log n)$ $O(\log n)$-bit messages w.h.p. This all can
      be done in $O(\log n)$ rounds w.h.p.  by using the routing protocol
      from Fact 2.

       Step 9 is done locally.

       Step 10 also requires more consideration.
       A half-space in the batch of the $i$-th node can
       give rise to at most one face of $C_j\cap I_i,$ $1\le j\le c_R.$
       The numbers of faces, edges and vertices on the surface
       of a convex polyhedron are tied by the Euler formula for
       planar graphs. Moreover, the vertices are of degree at least three
       so their number is at most two thirds of the number of edges.
       Hence, 
       the total size of the faces of $C_j\cap I_i,$ $1\le j\le c_R,$
      $1\le i\le n,$ is bounded from above by a constant multiple of
      the sum of the numbers of half-spaces in $S$ intersecting $C_j,$
      $j=1,...,c_R,$ respectively. We may assume that the latter sum
      is $O(n^2)$ by the first goodness condition. Thus, we can
      use the sorting protocol from Fact 3
      in order to sort the
      non-discarded edges of the
      faces by the two-component keys
      in $O(1)$ rounds.
      Given a list of the edges in the sorted order held in
      consecutive nodes of the congested clique, we can use
      the following observation to determine the edges of
      the convex faces of $I_S.$
      
      Consider a maximal set $E$ of the sorted edges of the faces of $C_j\cap
      I_S$, $j=1,\dots,c_R,$ placed on a given
      plane such that the edges in $E$ share the same
      line.
      It
      is sufficient to identify the extreme endpoints of the edges in
      $E$ in order to determine the edge of the face of $I_S$ placed on the plane
      and aligned with the line.

      Note that the edges in $E$ form a continuous block of
      the sorted list of edges. Hence, even if the block
      was held in several consecutive nodes of the congested
      clique, it is relatively easy to implement
      the identification of the extreme endpoints in $O(1)$ rounds.
      The involved nodes need to
      inform their companions about the extreme endpoints of the edges
      in $E$ they hold etc.
      
      It remains only to enhance the $\alpha$s in the probabilistic
      steps to conclude that the whole protocol runs in
      $O(\log n)$ rounds with probability at least $1-\frac 1 {n^{\alpha}}.$
      \qed
    \end{proof}
\junk{
    The correctness of the $IHS$ protocol w.h.p. is obvious.  In
    particular, Step 1 provides a random permutation of the input
    half-spaces w.h.p.  Steps 2 through 7 are repeated $O(\log n)$
    times w.h.p.  and Step 8 can be implemented in $O(\log n)$ rounds
    w.h.p.  by the analysis in the proof of Lemma \ref{lem: ihs}.
    Hence,}
    Lemma \ref{lem: ihs} yields our next main result.
    
    \begin{theorem}
    Let $S$ be a set $S$ of $n^2$ half-spaces in $E^3$,  each specified
     by $O(\log n)$ bits, held in $n$-half-space batches at
     the $n$ nodes of the congested clique.
     The intersection of the half-spaces in $S$ can
     be computed in $O(\log n)$ rounds on the congested clique.
    \end{theorem}
    \begin{proof}
      We may assume without loss of generality that
      the point set dual to $S$ is not located on a single plane.
      Simply, one can easily detect if the dual point set is placed on
      a common plane in $O(1)$ rounds and if so apply the $O(\log n)$-round protocol
      for the convex hull of $n^2$ points in the plane with $O(\log
      n)$-bit coordinates from \cite{JLLP24} to this set, solving the dual
      problem.
\junk{
The correctness of the $IHS$ protocol w.h.p. is pretty obvious
      up to Step 9 resulting in producing the intersections $C_j\cap
      I_S,$ $j=1,\dots, c_R.$
      Steps 2 through 7 are repeated $O(\log n)$ times w.h.p.  and all
      steps can be implemented in a finite number of rounds w.h.p.  by
      the analysis in the proof of Lemma \ref{lem: ihs}.}

      The correctness of the $IHS$ protocol w.h.p. follows from the
      fact that repeating steps~2--7 until a good sample of
      half-spaces from~$S$ is obtained and then performing steps~8, 9
      will produce the necessary intersections $C_j \cap I_S$, $j =
      1,\dots,c_R$ w.h.p.  The correctness of the gluing Step 10
      follows from the fact that the edges in
      $\bigcup_{j=1}^{c_R}C_j\cap I_S$ resulting from the
      intersections of $I_S$ with the sides of cones $C_j$ incident to
      $p^*$ (see Fig. \ref{fig: cone}) have to be removed and the
      remaining edges sharing common lines have to be glued
      together.

    By Lemma \ref{lem: ihs}., the protocol $IHS$ uses $O(\log n)$ rounds w.h.p.

    It remains to solve the problem with
    determining a point $p^*$ inside the intersection
    of the half-spaces in $S.$ We can follow the method outlined
    by Reif and Sen in \cite{RS}. First, we apply the inverse of the dual
    transform to $S$  to obtain a set $S'$ of $n^2$ points in $E^3.$
    Next, we determine a point inside the convex hull of $S'$
    in $O(1)$ rounds.

    For this purpose, each node 
    checks if its batch contains four
    points that are not placed on the same plane.  If so, the node
    computes the {\em centroid} of the tetrahedron induced by the quadruple
    and reports it to the other nodes (each coordinate of the centroid
    is the average of the corresponding coordinates of
    the points in the quadruple). Note that the
    centroid is placed inside the tetrahedron and the length of 
    the bit representations of the coordinates of the centroid
    can be up to two bits greater than that of the points in the quadruple.
    Otherwise, if all points
    in each batch are placed on the same plane, each node sends a triple
    of its points to the first node.  Then, the first node should be
    able to detect a quadruple of points in the union of the received
    triples that is not placed on the same plane and report  
    the centroid of the tetrahedron induced
    by the quadruple of points.  Otherwise all the points in $S'$
    would be placed on a common plane which would contradict our
    assumption.

    Next, we move the origin to the reported point inside the convex
    hull of $S'$ and move the point set $S'$ accordingly. Let $S''$ be
    the resulting point set.  Now, we can apply the $IHS$ protocol to
    the set of half-spaces dual to $S''$ with $p*$ set to the origin
    since the intersection of these dual half-planes is known to
    include the origin \cite{RS}. After that, it remains to replace
    the half-spaces bounding the intersection of the half-spaces dual
    to the points in $S''$ with their $1-1$ counterparts in $S$
    in order to obtain the intersection of half-spaces in $S.$
        \qed
    \end{proof}  

      \begin{corollary}
      The following geometric structures can be
      constructed in $O(\log n)$
      rounds on the congested clique w.h.p.
      \begin{enumerate}
      \item
        The convex hull  of a set of $n^2$ points with $O(\log n)$-bit coordinates in $E^3$
        given
        in $n$-point batches at the $n$ nodes of the congested
        clique.
      \item
        The Voronoi diagram and the Delaunay triangulation
        of a set  $n^2$ points with $O(\log n)$ bit
        coordinates in $E^2$ given
        in $n$-point batches at the $n$ nodes of the congested
        clique.
      \item
        The nearest neighbors for each of the $n^2$ input points with
        $O(\log n)$-bit coordinates in $E^2$ given in $n$-point
        batches at the $n$ nodes of the congested clique.
      \end{enumerate}
      \end{corollary}
      \begin{proof} (1) By the dual transform, the construction
        of convex hulls in $E^3$ reduces to the construction
        of intersections of half-spaces in $E^3.$
        By the lifting mapping \cite{E,GS}
        and the method used in the proof of Lemma \ref{lem: dtvd},
        (2) reduces to (1) in $O(1)$ rounds.
        (3) A nearest neighbor of an input point $u$ can be found among
        the endpoints of the edges of the Delaunay triangulation
        of the input point set incident to $u$. Therefore,
        the nearest neighbors can be easily obtained from the
        Delaunay triangulation
        in $O(1)$ rounds. For instance, we can apply the doubling
        and sorting method from the proof of Lemma \ref{lem: dtvd}
        taking into account the edge lengths instead of the angles
        formed by incident edges,
        in order to obtain
        for each input point $u$ a shortest
        Delaunay edge incident to $u.$
        Immutably, similarly as in the case of angles in the proof of Lemma \ref{lem: dtvd},
        the edge lengths are symbolically specified
        by the endpoints of edges and they
        are computed and used only locally
        for the purpose of comparisons.
        \qed
      \end{proof}}

      \section{Final remarks}
      The {\em message complexity} of a protocol in the congested
      clique model is the maximum total number of $O(\log n)$-bit
      messages exchanged by the $n$ nodes of the congested clique
      during a run of the protocol (e.g., see \cite{PeS}).
      In case
      of our protocols, it is easily seen to be the product of the
      maximum number of messages that can be exchanged in a single
      round, i.e., $\Theta(n^2),$ times the number of required rounds.
      Thus, the message complexity of our deterministic protocols for
      the Delaunay triangulation and the Voronoi diagram of $n^2$
      point sets from Section 3 is $O(n^2\log n)$.
           
The remaining major  open problem is the derivation of a low
  polylogarithmic upper bound on the number of rounds sufficient
  to deterministically construct the Voronoi diagram of $n^2$ points
  with $O(\log n)$-bit coordinates in the Euclidean plane (when the
  points are not necessarily randomly distributed) on the congested
  clique with $n$ nodes.  This seems feasible but it might require a
  substantial effort as in the PRAM case \cite{AP95,ACG85}.

   Note here that the existence of an $O(\log n)$-time (unit cost)
  PRAM algorithm for a geometric problem on a point set (e.g., \cite{AP95}) does not
  guarantee the membership of the problem in the $NC^1$ class defined
  in terms of Boolean circuits \cite{FW23,NS22}. Simply, assuming that the input
points have $O(\log n)$-bit coordinates, the arithmetic operations of
the PRAM implemented by Boolean circuits of bounded fan-in have a
non-constant depth, at least $\Omega( \log \log n).$ Consequently, the
Boolean circuit simulating the $O(\log n)$-time PRAM for fixed input size can have
a super-logarithmic depth. This is a subtle and important point in the context of
relatively recent results of Frei and Wada  providing simulations of the
classes $NC^{k}$, $k>0,$  by MapReduce (see Theorems 9 and 10  in \cite{FW23}), and
consequently, in the Massively Parallel Computation (MPC) and BSP models
(see Theorem 1 in \cite{NS22}).
Due to the $O(1)$-round routing protocol of Lenzen \cite{L},
the congested clique model in our setting  can be roughly regarded as
a special case of MPC, where the size of the input is approximately
the square of the number of processors. For this reason, the $NC$ simulation
results from \cite{FW23} 
are relevant to our model only  when the parameter
$\epsilon$ in the exponent of the space bounds in \cite{FW23} equals $\frac 12$. This is possible in case
of Theorem 9 in \cite{FW23} on $NC^1$ simulation
but not possible in case of Theorem 10 in \cite{FW23} on $NC^k$,
k > 0, simulation. However, the proof of the former theorem
in \cite{FW23}  relies on
a strict logarithmic upper bound on the depth
of Boolean circuit of bounded fan-in 
required by Barrington's characterization of the $NC^1$ class
in terms of bounded-width polynomial-size branching programs.
Otherwise, one has to adhere to  the direct circuit simulation method from \cite{FW23}
that does not work for $\epsilon=\frac 12.$
In summary, Theorems 9 and 10 in \cite{FW23} do not  seem to have any direct consequences
for geometric problems on point sets in our model setting.
\junk{Reif and Sen \cite{RS} by
refining Clarkson's sequential random sampling method \cite{C88} for
the more general problem of the construction of the intersection of a
finite set of half-spaces in the three dimensional space could avoid
the merging bottleneck.  In effect, they obtained a work optimal,
randomized $O(\log n)$-time CREW PRAM algorithm for the Voronoi
diagram.  More efficient PRAM algorithms for the Voronoi diagram are
known for random planar point sets \cite{JLLP24,LKL,VV}.}
  \small
  \bibliographystyle{abbrv}
  \bibliography{Voronoi}
  \vfill
  \end{document}